\newcommand{\stkout}[1]{\ifmmode\text{\sout{\ensuremath{#1}}}\else\sout{#1}\fi}
\newtheorem{theorem}{Theorem}[section]
\newtheorem{lemma}[theorem]{Lemma}
\newtheorem{proposition}[theorem]{Proposition}
\newtheorem{remark}{Remark}
\newcommand{\until}[1]{\{1,\dots,#1\}}
\newcommand{\untill}[1]{\{0,\dots,#1\}}
\newcommand\aamsout{\bgroup\markoverwith{\textcolor{violet}{\rule[0.5ex]{2pt}{1pt}}}\ULon}
\newcommand{\real}{\mathbb{R}}
\newcommand{\mc}{\mathcal}
\DeclareSymbolFont{bbold}{U}{bbold}{m}{n}
\DeclareSymbolFontAlphabet{\mathbbold}{bbold}
\newcommand{\Z}{\mathbb{Z}}
\newcommand{\N}{\mathbb{N}}
\newcommand\oprocendsymbol{\hbox{$\square$}}
\newcommand\oprocend{\relax\ifmmode\else\unskip\hfill\fi\oprocendsymbol}
\newcommand*{\QEDA}{\hfill\ensuremath{\blacksquare}}%
\let\NAT@parse\undefined
\title{\bf \LARGE Scalable Associative Memory for Robust Pattern Storage }
\title{\bf \LARGE Kuramoto-Based Memory Networks Feature Exponential
  Capacity}
\title{\bf \LARGE Oscillatory Associative Memory with Exponential
  Capacity}
\author{Taosha Guo$^{\dagger}$, Arie Ogranovich$^{\dagger}$, Arvind R. Venkatakrishnan, Madelyn R. Shapiro, \\Francesco Bullo and Fabio Pasqualetti 
\thanks{This material is based upon work supported in part by ARO award W911NF-24-1-0228.
T. Guo and F. Pasqualetti are with the
Department of Mechanical Engineering at the 
University of California, Riverside, Riverside, CA, 92521, USA. Emails: \href{mailto:tguo023@ucr.edu}{\{\texttt{tguo023}},
\href{mailto:fabiopas@ucr.edu}{\texttt{fabiopas\}@ucr.edu}},
A. Ogranovich, A. R. Venkatakrishnan, M. R. Shapiro and F. Bullo are with the
Department of Mechanical Engineering at the 
University of California, Santa Barbara, Santa Barbara, 93106, USA. 
\href{mailto:arieogranovich@ucsb.edu}
{\{\texttt{arieogranovich}},
\href{mailto:arvindragghav@ucsb.edu}
{\texttt{arvindragghav}},
\href{mailto:madelynshapiro@ucsb.edu}
{\texttt{madelynshapiro}},
\href{mailto:bullo@ucsb.edu}{\texttt{bullo\}@ucsb.edu}},
}
}
\begin{document}

\maketitle
\thispagestyle{empty}

\begingroup
\renewcommand\thefootnote{}
\footnotetext{$^\dagger$ These authors contributed equally.}
\endgroup
\begin{abstract}
  The slowing of Moore's law and the increasing energy demands of
  machine learning present critical challenges for both the hardware
  and machine learning communities, and drive the development of novel
  computing paradigms. Of particular interest is the challenge of
  incorporating memory efficiently into the learning process. Inspired by how human brains store and retrieve information, associative
  memory mechanisms provide a class of computational methods that can
  store and retrieve patterns in a robust, energy-efficient manner.
  Existing associative memory architectures, such as the celebrated
  Hopfield model and oscillatory associative memory networks, store
  patterns as stable equilibria of network dynamics. However, the capacity (i.e. the number of patterns that a network can memorize
  normalized by their number of nodes) of existing oscillatory models have been shown to decrease with
  the size of the network, making them impractical for large-scale,
  real-world applications.
  In this paper, we propose a novel associative memory architecture
  based on Kuramoto oscillators. We show that the capacity of our
  associative memory network increases exponentially with network
  size and features no spurious memories. In addition, we present
  algorithms and numerical experiments to support these theoretical
  findings, providing guidelines for the hardware implementation of
  the proposed associative memory networks.




\end{abstract}
\section{Introduction}{\label{sec:intro}}


  The limitations of Moore’s law and the growing energy demands of current AI models pose critical challenges to both the hardware and machine learning communities. There is an urgent need for more energy-efficient computing hardware that maintains high performance for increasingly complex tasks. Inspired by how the human brain stores and retrieves memories, associative memory mechanisms offer a class of computational methods that can store and retrieve patterns in a robust and energy-efficient manner. For instance, in the celebrated Hopfield model of associative memory \cite{JJHopfield:82}, neurons take discrete values ($\pm 1$) to store patterns. When a new pattern is presented, the network retrieves the stored pattern that best matches it. Of particular interest is the memory capacity.
  Most of the associative memory networks in the existing literature suffer from poor memory capacity, meaning that they cannot store and retrieve patterns correctly when presented with a large number of patterns.


In this paper, we introduce a novel network model for associative memory based on Kuramoto oscillators \cite{Kuramoto1975}.
The network consists of a sequence of
cycle networks interconnected in a way that subsequent cycles share
only one vertex. We fully characterize the stable phase-locked configurations of the proposed network.
Importantly, we show that the proposed network has no spurious memories and achieves a capacity that scales exponentially with the network cardinality.

\noindent \textbf{Related work.} 
The classical Hopfiled memory network suffers from poor scalability and capacity \cite{RJM-ECP-ERR-SSV:87}.
Several studies \cite{AS:97}, \cite{LP-IG-DG:86}, \cite{TN-YCL-FC:04}, \cite{TN-FCH-YCL:04} have explored various associative memory network architectures to improve  capacity.
In particular, \cite{TN-FCH-YCL:04, TN-YCL-FC:04} introduce a Kuramoto oscillator–based network for associative memory, where coupling weights are calculated  from the Hebbian learning rule for a given set of patterns. In this associative memory model a pattern can be stored and retrieved flawlessly if its phase configuration forms a stable phase-locked state in the network.   In addition, a second Fourier term with strength \(\epsilon\) is added to the network dynamics to improve the memory capacity. Moreover, when the memorized patterns are mutually orthogonal, \cite{RWH-KK:15} and \cite{XZ-ZL-XX:20} demonstrate that the network can effectively distinguish these patterns from most others for a small \(\epsilon\). However, the assumption of orthogonal patterns is often unrealistic.  \cite{ZL-XZ-XX:22} introduce an algorithm to subgroup the standard patterns and apply an orthogonal lift to each subgroup.  Nevertheless, existing dynamical models of associative memory feature a capacity that decreases with the network cardinality, which limits their use in practice.

\noindent\textbf{Contributions.} 
The main contributions of this article are as follows. First, we formalize the problem of designing  an associative memory network using Kuramoto oscillators. Next, we propose a novel network structure which consists of a sequence of
cycles interconnected in a way that subsequent cycles share
only one vertex. Most importantly,  we give analytical expressions of the stable phase-locked configurations of the proposed network. By mapping binary patterns to these stable phase-locked configurations, we show that the memory capacity of our network 
grows exponentially with the network cardinality and features no spurious memory states. We also compare the memory capacity of the proposed network to that of alternative network structures, confirming the effectiveness and capacity of the proposed structure.


\noindent \textbf{Organization of the paper.} 
The rest of the paper is organized as follows. Section \ref{sec: network model} contains our problem formulation. Section \ref{sec: associative_kuramoto_exponential} presents main results of characterizing  stable phase-locked configurations of our associative memory model, as well as methods to store and retrieve binary patterns.  Section \ref{sec: numerical_section}  contains numerical comparisons between the proposed model and other  associative memory structures. Section \ref{conclusion} concludes the paper.

\section{Problem formulation and Background}\label{sec: network model}
Associative memory is a mechanism that allows the human brain to store
memories and later retrieve them in association with external
stimuli. Among the many existing models of associative memory,
Hopfield networks put forth the hypothesis that memories correspond to
dynamically stable attractors of a dynamical process, and that
retrieval of memories happens through the convergence of such
dynamical process to the attractor closest to the current
stimulus. Following a large body of literature and in an attempt to
maintain biological plausibility, in this work we uphold Hopfield's
view of memories as dynamical attractors, yet propose and analyze an
oscillator-based model to store and retrieve memories.

 To formalize our discussion, consider a set of $n$ oscillators with Kuramoto dynamics interconnected as described by the
weighted graph $\mc G = (\mc V, \mc E)$ with adjacency matrix $A =
[a_{ij}]$. The dynamics of oscillator $i \in \mc V$ read as
\begin{align}\label{eq: Kuramoto model}
  \dot{\theta}_i = \omega_i+ \sum_{j=1}^n a_{ij} \sin(\theta_j - \theta_i),
\end{align} 
where $\omega_i$ and $\theta_i$ denote the oscillator's natural
frequency and phase, respectively, and $n = |\mc V|$ is the total
number of oscillators. A phase-locked configuration, or
(frequency-synchronized) equilibrium of the network \eqref{eq:
  Kuramoto model} is a set of oscillators' phases
$\theta_1^\text{eq},\dots, \theta_n^\text{eq}$ that are a stable
solution to the dynamics \eqref{eq: Kuramoto model} and satisfy
$\dot \theta_i^\text{eq} = \dot \theta_j^\text{eq}$ at all times and
for all $i,j \in \until{N}$.  A large body of literature has studied
the number, characteristics, and stability properties of phase-locked
configurations of networks of Kuramoto oscillators.

Associative memory can be implemented using a dynamical process such
as the Kuramoto network \eqref{eq: Kuramoto model} by creating an
association map between the asymptotically stable phase-locked configurations of the model and the
memories. Retrieval of a memory is completed by initializing the model sufficiently close to the associated phase configuration and allowing the system to converge via the asymptotic stability of the phase configuration. Let $x^1, \dots, x^p$
be the binary patterns to be memorized, where
$x^j = [x_1^j \; \cdots \; x_l^j]$ and $x_i^j  \in \{0,1\}$ \footnote{In the Hopfield network, $x_i^j \in \{-1,1\}$ are binary states.} 
for $i \in \until{l}$ and $j \in \until{p}$. 
A straightforward map between the phase-locked configurations of \eqref{eq: Kuramoto model} and the memory patterns
$x^i$ is obtained by requiring the  phases to equal either
$0$ or $\pi$ and using the cosine function to map these values to $1$
and $-1$, respectively. This scheme has been used in prior works \cite{LP-IG-DG:86}, \cite{TN-YCL-FC:04}, \cite{TN-FCH-YCL:04}, but the models considered have poor scalability and
capacity, as measured by the number of memories that can be
successfully stored in the model normalized by the dimension of the
model. In this work we propose a different mapping between binary 
patterns and the phase-locked configurations of \eqref{eq: Kuramoto model} and
construct associate memory models based on this mapping. We study their memory capacity, namely
\begin{align}
  C = p / n ,
\end{align}
where $p$ is the number of binary patterns that the network can memorize and $n$ is the cardinality of the network, and we show our model's memory capacity grows exponentially with the cardinality of the network. This is in contrast to existing dynamical models of associative memory, where the capacity decreases with the network cardinality.

\begin{remark}{\emph{\bf (Capacity of existing dynamical models of associative memory)}} The capacity of a classic Hopfield network has been shown to scale as $1/(2 \log n)$ for large networks of size $n$ \cite{RJM-ECP-ERR-SSV:87}. Associative network models based on oscillators display similar trends, even when the dynamics are modified to include an additional, higher frequency, interaction term \cite{TN-YCL-FC:04}. Surprisingly, these models feature a capacity that decays as the network grows, which stands as a fundamental limitation of this dynamics-based approach to associative memory. In this paper we present an oscillator-based architecture that overcomes this limitation.
  \oprocend
\end{remark}



\section{Associative Kuramoto networks with exponential capacity}
\label{sec: associative_kuramoto_exponential}
In this section we propose an oscillatory associative memory network
with Kuramoto dynamics and show that this network features a capacity
that grows exponentially with the network cardinality. We start by
defining the proposed network structure, which we call \emph{1D
  honeycomb network}. A 1D honeycomb network consists of a sequence of
cycle networks interconnected in a way that subsequent cycles share
only one vertex (see Fig. \ref{fig: network structure}). To be
specific, a 1D honeycomb network is defined by two integers, the
number of cycles, $m$, and the number of nodes in each cycle, $n_\text{c}$,
and is the described by the graph
$\mc G_m^{n_\text{c}} = (\mc V_m^{n_\text{c}} , \mc E_m^{n_\text{c}})$
where $\mc V_m^{n_\text{c}} = \until{m(n_\text{c} -1) + 1}$ and $\mc
E_m^{n_\text{c}} = \mc E_1 \cup \mc E_2$, with
\begin{equation}\label{eq: edges of cells}
  \begin{aligned}
    \mc E_1 &=
    \left\{
      \begin{array}{cc}
        (i,i+1), & 1 \leq i \leq m(n_c-1)
      \end{array}
    \right\},
    \\
    \mc E_2 & =
    \left\{
      \begin{array}{cc}
        (i, i+n_c-1), & \text{} i = p (n_\text{c} -1)+1
      \end{array}
    \right\},                                       
  \end{aligned}
\end{equation}
and $\text{} p = \untill{m-1}$.

\begin{figure}[t]
  \centering
  \includegraphics[width=0.85\columnwidth,trim={0cm 0cm 0cm
    0cm},clip]{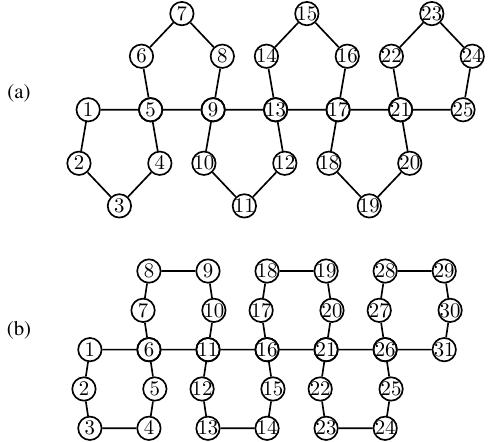}
  \caption{This figure shows the 1D honeycomb Network described by the
    graph $\mc H_m^{n_\text{c}}$. In (a), the network has $m=5$
    cycles, and each cycle is a pentagon with
    $n_c =5$ nodes. In (b), the network has $m=5$ cycles, and each cycle is a pentagon $n_c = 6$ nodes.}
  \label{fig: network structure}
\end{figure}

\begin{remark}{\emph{\bf (Alternative 1D honeycomb networks)}}{\label{remark: honeycomb networks}}
    The results presented in this paper hold also for networks different from the 1D honeycomb networks defined in \eqref{eq: edges of cells}. In fact, our results require only that adjacent cycles share only one node and no edges, as it is the case for the networks in Fig. \ref{fig: network
    structure}. However, the structure and numbering of the nodes defined in \eqref{eq: edges of cells} are convenient for our analysis and it do not restrict the implications of our results. \oprocend 
\end{remark}

\begin{figure}[t]
  \centering
  \includegraphics[width=0.95\columnwidth,trim={0cm 0cm 0cm
    0cm},clip]{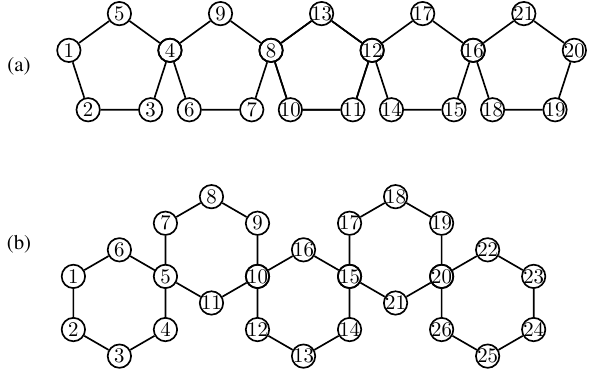}
  \caption{This figure demonstrates a different structure of the 1D
    honeycomb network described in \eqref{eq: edges of cells} In (a)
    we have $m=5$ and $n_c=5$. In (b) we have $m=5$ and $n_c=6$. }
  \label{fig: alternative network structure}
\end{figure}

To construct as associative memory network, we associate an oscillator
with each node of a 1D honeycomb network and assign the oscillators'
natural frequencies and interconnection weights. As we show in the
next result, the 1D honeycomb network model has at least exponentially many stable phase-locked configurations, and we find a simple characterization of many configurations; we will later see that there are no stable configurations aside from these.

\begin{theorem}{\emph{\bf (Phase-locked configurations of 1D honeycomb networks)}}{\label{thm: capacity}} 
Let $\mc G_m^{n_\text{c}}$ be a 1D honeycomb network with $m \ge 1$
  and $n_\text{c} \ge 5$, and let $A = [a_{ij}]$ be its adjacency
  matrix with $a_{ij} = c$, for any $c \in \real^{+}$, for all
  $(i,j) \in \mc E_{m}^{n_\text{c}}$. Consider the Kuramoto network
  \eqref{eq: Kuramoto model} with $\omega_i = \omega_j$ for all
  $i,j \in \mc V_{m}^{n_c}$. Then, the Kuramoto network has  $(2 \lceil \frac{n_c}{4} \rceil - 1)^{m}$ {locally exponentially stable} phase-locked configurations characterized by the
  following phase differences:
  \begin{enumerate} {\label{condition}}

   \item[(C1)]\label{condition: thm 3.1 condition 1}
    $\theta_i - \theta_{j} = \frac{2\pi k}{n_c}$ with $k \in \Z$, $|k| \leq \lceil \frac{n_c}{4}\rceil - 1$, and
    $(i,j) \in \mc E_m^{n_\text{c}}$, and

  \item[(C2)] $\theta_i - \theta_{i+1} = \theta_j - \theta_{j+1}$ for
    all $p \in \{0,\dots,m-1\}$ and
    $i,j \in \{p(n_\text{c} -1) +1,\dots, (p+1)(n_\text{c} -1)\}$.
      \end{enumerate}
  \end{theorem}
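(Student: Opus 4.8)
The plan is to verify directly that every configuration satisfying (C1)--(C2) is a frequency-synchronized equilibrium, then to establish local exponential stability by reading the Jacobian as a weighted graph Laplacian, and finally to count the admissible configurations.

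First I would pass to the rotating frame $\phi_i = \theta_i - \omega t$, where $\omega$ is the common natural frequency, so that a phase-locked configuration is an equilibrium of $\dot\phi_i = c\sum_{j\sim i}\sin(\phi_j-\phi_i)$, i.e. it satisfies $\sum_{j\sim i}\sin(\phi_j-\phi_i)=0$ at every node $i$ and then $\dot\theta_i=\omega$ for all $i$. Conditions (C1)--(C2) describe a \emph{twisted state} on each cycle: within cycle $p$ all directed edge differences equal a common value $\delta_p$, and summing the $n_\text{c}$ differences around the cycle (including the closing edge of $\mc E_2$) forces $n_\text{c}\delta_p \in 2\pi\Z$, i.e. $\delta_p = 2\pi k_p/n_\text{c}$, which is exactly the quantization in (C1). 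I would then check the equilibrium condition node by node: every non-junction node has exactly two neighbors inside a single cycle, contributing $\sin(\delta_p)+\sin(-\delta_p)=0$; every junction node, shared by cycles $p$ and $p+1$, has two neighbors in each of the two cycles, contributing $\bigl(\sin(\delta_p)+\sin(-\delta_p)\bigr)+\bigl(\sin(\delta_{p+1})+\sin(-\delta_{p+1})\bigr)=0$. Hence every (C1)--(C2) configuration is an equilibrium for \emph{any} integers $k_p$; the bound on $k$ will enter only through stability.

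Next I would linearize. The Jacobian of \eqref{eq: Kuramoto model} at an equilibrium $\phi^*$ is $J = -L$, where $L$ is the symmetric weighted Laplacian with edge weights $w_{ij} = c\cos(\phi_j^* - \phi_i^*)$. Because adjacent cycles share a vertex but no edge, every edge of cycle $p$ carries difference $\pm\delta_p$ and therefore the \emph{same} weight $c\cos(2\pi k_p/n_\text{c})$, with junction nodes merely inheriting two edge weights from each incident cycle. An elementary fact is that, for integer $k$, $\cos(2\pi k/n_\text{c})>0 \iff |k| < n_\text{c}/4 \iff |k|\le \lceil n_\text{c}/4\rceil-1$, so the range in (C1) is exactly the set of twists with strictly positive weight. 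With all $w_{ij}>0$, $L$ is the Laplacian of the connected graph $\mc G_m^{n_\text{c}}$ with positive weights, whence $L\succeq 0$ and $\ker L = \operatorname{span}(\vectorones)$. Therefore $J=-L$ is negative semidefinite with a single zero eigenvalue along the rotational direction $\vectorones$, which is precisely the symmetry direction of the dynamics; equivalently, the reduced dynamics on phase differences has a Hurwitz Jacobian, giving local exponential stability.

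Finally, for the count: a configuration in (C1)--(C2) is determined, modulo a global phase shift, by the tuple $(k_0,\dots,k_{m-1})$ of cycle twists, and since each $\delta_p = 2\pi k_p/n_\text{c}\in(-\pi/2,\pi/2)$ these differences are pairwise distinct across tuples, so distinct tuples yield distinct configurations. As each $k_p$ ranges independently over the $2\lceil n_\text{c}/4\rceil-1$ integers with $|k_p|\le \lceil n_\text{c}/4\rceil -1$, there are $(2\lceil n_\text{c}/4\rceil-1)^m$ such configurations. I expect the main obstacle to be the careful bookkeeping at the junction nodes and closing edges in the equilibrium check, together with correctly framing ``local exponential stability'' relative to the one-dimensional rotational symmetry; the spectral step itself is short once one recognizes that, because cycles share only vertices, $J$ is a genuine positive-weight Laplacian whose simple zero eigenvalue is forced by connectivity.
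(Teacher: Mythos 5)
Your proposal is correct and follows essentially the same route as the paper's proof: the paper also establishes the equilibrium property directly from (C1)--(C2), writes the Jacobian as $-B\,\mathrm{diag}(a_{ij}\cos(\theta_i-\theta_j))\,B^\top$ (which is exactly your weighted Laplacian $-L$), and uses the bound $\frac{2\pi(\lceil n_c/4\rceil-1)}{n_c}<\pi/2$ to conclude all edge weights are positive and hence local exponential stability modulo the rotational zero eigenvalue. Your version is somewhat more self-contained --- the explicit node-by-node cancellation at junction nodes, the quantization of $\delta_p$ from the cycle-closure constraint, and the distinctness argument in the count are all left implicit in the paper --- but these are elaborations of the same argument, not a different one.
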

  \vspace{1em}
  \begin{proof}
Let the phases $\theta := (\theta_1,\dots,\theta_n)$ satisfy (C1) and (C2). Then, from \eqref{eq: Kuramoto model} and under the assumptions of the theorem, we have $\dot \theta_i = \dot \theta_j$ for all $i,j\in \mathcal{V}_{m}^{n_c}$, which implies that the system is in a phase-locked configuration. Next, let \(B \in \mathbb{R}^{n \times |\mathcal{E}_{m}^{n_c}|}\) be the incidence matrix of the network. According to~\cite{Doerfler2014}, the Jacobian of the Kuramoto dynamics is given by
\[
J({\theta})
= \frac{ \partial}{\partial {\theta}} \dot{\theta}
= -B \,\mathrm{diag}\!\bigl(a_{ij}\,\cos(\theta_i - \theta_j)\bigr)\,B^\top.
\]
\noindent Notice the phase differences described by (C1) and (C2) are strictly less than \(\pi/2\) due to the fact that 
\begin{align*}
|\theta_i - \theta_j| = \dfrac{2\pi |k|}{n_c}  &\leq \frac{2 \pi (\lceil \frac{n_c}{4} \rceil - 1)}{n_c}\\&< \dfrac{(2 \pi \frac{n_c}{4})}{n_c} = \pi/2.
\end{align*} Thus, all eigenvalues of the corresponding Jacobian are negative except for the zero-eigenvector in the direction of rotation \cite{Doerfler2014}, and the phase-locked configurations are locally exponentially stable.
\end{proof}



\begin{figure}[t]
  \centering
  \includegraphics[width=1\columnwidth,trim={0cm 0cm 0cm
    0cm},clip]{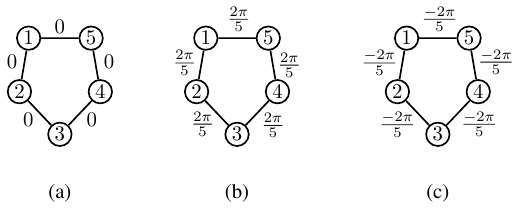}
  \caption{This figure shows the $3$  stable phase-locked configurations for $m=1$ and $n_c=5$. (a) The phases of the oscillators remains equal over time. (b) The phase differences  are locked to \(-\tfrac{2\pi}{5}\). (c) The phase differences are locked to \(\tfrac{2\pi}{5}\).}
  \label{fig: phase lock 0}
\end{figure}

Fig. \ref{fig: phase lock 0} shows an example of the stable phase-locked configurations with \(m=1\) and \(n_c=5\). 
 By mapping each binary pattern to a stable phase-locked configuration described in Theorem~\ref{thm: capacity}, the proposed 1D honeycomb Kuramoto network can represent up to
\begin{align}\label{eq: num equilibria}
    N_{\text{eq}} = \bigl(2\lceil \tfrac{n_c}{4} \rceil - 1\bigr)^m
\end{align}
distinct binary patterns. The memory capacity 
of the Kuramoto networks in \eqref{eq: edges of cells} is then given by
\begin{align}{\label{eq: capacity}}
C = \frac{p}{n}
= \frac{\bigl(2\lceil \tfrac{n_c}{4} \rceil - 1 \bigr)^m}{(n_c -1)m + 1},
\end{align}

where $p$ is the number of binary patterns  the network can memorize and $n = m(n_c-1)+1$ is the cardinality of the 1D honeycomb network. \eqref{eq: capacity} implies that the proposed network achieves exponential capacity with respect to the network cardinality $n$ because $p$ grows exponentially in $m$, while $n$ increases only linearly in $m$.


To map a binary pattern to each stable phase-locked configuration, we can simply assign a unique index to each configuration and then convert that index into a binary pattern.
For example, when \(n_c = 5\) and \(m = 2\), Theorem~\ref{thm: capacity} implies that each cycle can independently have a phase difference of \(\tfrac{2\pi k}{n_c}\) for \(k \in \{-1,0,1\}\), yielding \(\bigl(2\lceil \tfrac{5}{4}\rceil - 1\bigr)^2 = 9\) distinct stable phase-locked configurations. The converting between the stable phase-locked configurations  and binary patterns is shown in Table~\ref{table: mapping}. Note that the mapping between binary patterns and choices of $k$ is not unique and can be chosen arbitrarily.


Theorem \ref{thm: capacity} describes a class of phase-locked configurations for which any angle difference along an edge is strictly less than $\pi/2$. Angle states $\theta$ which satisfy this property are referred to as \textit{phase-cohesive}, and can be classified via existing theory for Kuramoto models \cite{Jafarpour2019Flows}.

A key issue for associative memory networks is that the network might converge to spurious states other than the states that represent the desired memories. In the next theorem we show that the only stable phase-locked configurations of the Kuramoto network in Theorem \ref{thm: capacity} are those that are described by (C1) and (C2). The proof idea is to show that all stable phase-locked configurations of the 1D honeycomb topology are phase-cohesive, and then to use the theory of \cite{Jafarpour2019Flows} to classify them.
\begin{table}[t]
\label{table: mapping}
\centering
\begin{tabular}{l|l|c|c}
Phase differences & Scaled values $k$& Index & \text{ Binary patterns} \\
\hline
$ (-\frac{2\pi}{5},-\frac{2\pi}{5}) $ & $(-1, -1)$ & 1 & 0001 \\
$ (-\frac{2\pi}{5},0) $& $ (-1, 0)$  & 2 & 0010 \\
$ (-\frac{2\pi}{5},\frac{2\pi}{5}) $ & $ (-1, +1)$ & 3 & 0011 \\
$ (0,-\frac{2\pi}{5}) $ & $ (0, -1)$  & 4 & 0100 \\
$ (0,0) $ & $ (0, 0)$   & 5 & 0101 \\
$ (0,\frac{2\pi}{5}) $ & $ (0, +1)$  & 6 & 0110 \\
 $ (\frac{2\pi}{5},-\frac{2\pi}{5}) $ &$ (+1, -1)$ & 7 & 0111 \\
 $ (\frac{2\pi}{5},0) $ &$ (+1, 0)$  & 8 & 1000 \\
 $ (\frac{2\pi}{5},\frac{2\pi}{5}) $ &$ (+1, +1)$ & 9 & 1001 
\end{tabular}
\caption{Mapping from stable phase-locked configurations to binary patterns  and vice versa \cite{Dek:97}.}
\end{table}
\begin{figure}[t]
  \centering
  \includegraphics[width=0.85\columnwidth,trim={0cm 0cm 0cm
    0cm},clip]{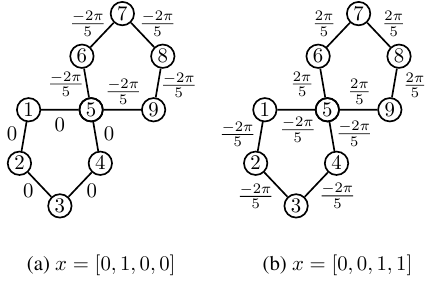}
  \caption{
  This figure shows the stable phase-locked configurations 
  of two different memory patterns for $m=2$ and $n_c=5$ (a). The phase-locked configuration that represents the binary pattern $x=[0,1,0,0]$.  (b). The phase-locked configuration  that represents the binary pattern $x = [0,0,1,1]$.
  }
  \label{fig: phase lock}
\end{figure}

\begin{theorem}{\emph{\bf (No spurious memories)}}{\label{thm: uniqueness}} 
    Consider the Kuramoto model \eqref{eq: Kuramoto model} with identical natural frequencies, and with a 1D honeycomb network. There are no stable phase-locked configurations of this system besides those already identified in Theorem \ref{thm: capacity}.
\end{theorem}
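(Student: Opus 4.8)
The plan is to reduce the whole honeycomb to its individual cycles and then classify the stable equilibria of a single Kuramoto cycle. With identical natural frequencies, the phase-locked configurations (in a rotating frame) are exactly the critical points of the potential $U(\theta)=-\sum_{(i,j)\in\mc E_m^{n_c}} a_{ij}\cos(\theta_i-\theta_j)$, the dynamics \eqref{eq: Kuramoto model} is the gradient flow of $U$, and, as in the proof of Theorem~\ref{thm: capacity}, the Jacobian is $J=-H$ with $H$ the weighted Laplacian $H_{ij}=-a_{ij}\cos(\theta_i-\theta_j)$. Thus a configuration is locally exponentially stable iff it is a nondegenerate local minimum of $U$, i.e. iff $H\succeq0$ with the single zero eigenvalue along the rotation direction.

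First I would show that both equilibrium and stability decouple across the cycles. Every edge lies in exactly one cycle, interior cycle-nodes have degree two, and consecutive cycles meet only at cut vertices. Writing the nodal balance $\sum_{j\sim i} a_{ij}\sin(\theta_j-\theta_i)=0$ at the degree-two nodes shows the flow $a_{ij}\sin(\theta_{i+1}-\theta_i)$ is constant around each cycle; starting from the free degree-two endpoint (node $1$) this forces the net within-cycle flow at the first cut vertex to vanish, and the full balance at that vertex transmits the same property to the next cycle. Inducting along the chain, every cycle's restriction is an equilibrium of the isolated $n_c$-cycle. For stability, since the edges partition among the cycles, $x^\top H x=\sum_{p=0}^{m-1} x^\top H_p x$ with $H_p$ the weighted Laplacian of cycle $p$; if each cycle is stable then each $H_p\succeq0$ and their kernels meet, through the shared cut vertices, only in the constant direction, recovering Theorem~\ref{thm: capacity}. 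Conversely, if some cycle $p_0$ has a negative eigenvalue of $H_{p_0}$ with eigenvector $u$, I extend $u$ to the whole graph by making it constant on every other cycle equal to its value at the relevant cut vertex; this extension is nonconstant, costs nothing off cycle $p_0$, and gives $x^\top H x=u^\top H_{p_0}u<0$. Hence the honeycomb is stable iff each isolated cycle is stable.

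It then remains to classify the stable equilibria of one $n_c$-cycle with uniform weight $c$, which is where phase-cohesiveness enters. Constant flow forces each edge difference to be one of two values $\delta$ or $\pi-\delta$ (mod $2\pi$) with $\sin\delta=f/c$, the obtuse (``long'') ones carrying negative Laplacian weight. If a long edge is present, the oriented-difference vector that places $(n_c-1)t$ on it and $-t$ on each of the other $n_c-1$ edges satisfies the cycle constraint $\sum_e y_e=0$ and makes $\sum_e w_e y_e^2\le c\cos\delta\,(2-n_c)(n_c-1)t^2<0$ for $n_c\ge3$, a descent direction of $U$ contradicting stability. Thus every stable cycle equilibrium is phase-cohesive; being phase-cohesive with constant flow, all its differences coincide and equal $2\pi k/n_c$ by the winding condition, with $|k|\le\lceil n_c/4\rceil-1$ from $|\delta|<\pi/2$. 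These are exactly (C1)--(C2), matching the winding-cell classification of phase-cohesive states in \cite{Jafarpour2019Flows}; assembling the $m$ independent cycles yields precisely the $(2\lceil n_c/4\rceil-1)^m$ configurations of Theorem~\ref{thm: capacity} and no others.

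The hard part is the implication stable $\Rightarrow$ phase-cohesive, i.e. ruling out equilibria that mix acute and obtuse differences. The cut-vertex structure of the honeycomb is exactly what makes this tractable: it lets me decouple stability cycle-by-cycle and lift a single cycle's instability to the whole network without the test direction interacting with neighboring cycles. The only remaining case, a difference equal to exactly $\pi/2$ (possible only when $4\mid n_c$), produces an extra zero eigenvalue of $H$ and is discarded as not locally exponentially stable.
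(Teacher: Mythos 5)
Your decomposition is genuinely different from the paper's argument and, for the most part, it works. The paper proceeds in two stages: first it proves that every stable configuration is phase-cohesive (Lemma \ref{lem:phase cohesiveness near degree 2 nodes} shows a degree-2 node forces equal, strictly acute adjacent differences, using a delicate equal-energy perturbation argument to exclude the antipodal case $\theta_j^* = \pi + \theta_k^*$; Proposition \ref{prop: phase cohesiveness of honeycombs} then propagates this through the chain of cycles by peeling off the leaf cycle), and second it invokes the winding-vector theory of \cite{Jafarpour2019Flows} to upper bound the number of phase-cohesive configurations by $N_{\text{eq}}$, so that the configurations of Theorem \ref{thm: capacity} exhaust them. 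You instead decouple both the equilibrium conditions (flow conservation around each cycle, transmitted across the cut vertices by induction along the chain) and the stability conditions (separability $x^\top H x = \sum_p x^\top H_p x$, with the constant-extension trick to lift a negative direction of one cycle to the whole graph), and then classify the stable equilibria of a single cycle directly via an explicit descent direction. Your computations check out: constant flow does force every difference into $\{\delta, \pi-\delta\}$, the test vector with entries $(n_c-1)t$ and $-t$ lies in the cut space because its entries sum to zero, and it makes the quadratic form negative whenever an obtuse edge is present and $\cos\delta > 0$. This route is more self-contained than the paper's — it needs neither the winding-vector counting nor the subtle part of Lemma \ref{lem:phase cohesiveness near degree 2 nodes} — and, notably, your flow-conservation induction handles the chord edges of Fig.~\ref{fig: network structure} that join two degree-4 junctions, which is exactly the case where the paper cannot use Proposition \ref{prop: phase cohesiveness of our graphs} and must resort to Proposition \ref{prop: phase cohesiveness of honeycombs}.

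The genuine gap is your last sentence. When $4 \mid n_c$ the cycle admits the equilibrium with every difference equal to $\pi/2$ (constant flow, winding number $n_c/4$); there the edge weights $\cos(\pi/2)$ all vanish, $H_p \equiv 0$, and your quadratic-form machinery is silent. Discarding this state as ``not locally exponentially stable'' proves the theorem only under the exponential-stability reading. But the theorem is about spurious memories, so what must be excluded is any configuration that can retain trajectories, i.e.\ any Lyapunov-stable configuration — equivalently, by the analytic-gradient-flow result of \cite{absil2006gradientEQ} on which the paper relies, any local minimum of the energy; a degenerate local minimum would still be a spurious memory. This is exactly the case the paper's Lemma \ref{lem:phase cohesiveness near degree 2 nodes} closes with its equal-energy perturbation argument (at a degree-2 node of the all-$\pi/2$ state the two neighbors are antipodal). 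You can close it within your framework by going to third order: writing $y_l$ for the perturbation's edge differences around the cycle (so $\sum_l y_l = 0$), the energy change is $c\sum_l \sin(y_l) = -\tfrac{c}{6}\sum_l y_l^3 + O(\|y\|^5)$, and a choice such as $y = (t,t,-2t,0,\dots,0)$ with the appropriate sign of $t$ makes this negative for arbitrarily small $\|y\|$; extending the perturbation constantly to the other cycles shows the full honeycomb state is not a local minimum, hence not Lyapunov stable. With that addition your proof is complete under the intended notion of stability.
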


\begin{proof}
    As remarked in \eqref{eq: num equilibria}, there are $N_{\text{eq}} = \bigl(2\lceil \tfrac{n_c}{4} \rceil - 1 \bigr)^m$ stable phase-locked configurations identified in Theorem \ref{thm: capacity}. We will show that there are at most $N_{\text{eq}}$ such configurations, which would imply that there can be no other stable phase-locked configurations besides the $N_{\text{eq}}$ already found.

    Based on the assumption that every edge contains a node of degree 2, Proposition \ref{prop: phase cohesiveness of honeycombs} in the appendix shows that every stable phase-locked configuration of the model must have angle differences less than $\pi/2$ along any edge.
    
    The theory of winding vectors in \cite{Jafarpour2019Flows} allows us to bound the number of such configurations with angle difference less than $\pi/2$ along every edge. To every cycle $\sigma$ in a graph $\mc G$, one can associate a \textit{winding number} $w_{\sigma} \in \Z$. When the angle difference along any edge in $\sigma$ is strictly less than $\pi/2$, one can follow the proof of Theorem 3.3 in \cite{Jafarpour2019Flows} and show that
    \begin{align}\label{eq: winding number bound}
        -\biggr\lceil \frac{n_{\sigma}}{4} \biggr\rceil +1\leq w_{\sigma}  \leq \biggr\lceil \frac{n_{\sigma}}{4} \biggr\rceil -1,
    \end{align}
    where $n_{\sigma}$ denotes the number of edges in $\sigma$ (see \cite{Jafarpour2019Flows} for further explanation). When a sequence of cycles $(\sigma_1, \sigma_2, ..., \sigma_m)$ forms a cycle basis (see \cite{Jafarpour2019Flows}, \cite{royle2001algebraic} for a definition of cycle basis) we call $(w_{\sigma_1}, w_{\sigma_2}, ..., w_{\sigma_m})$ the \textit{winding vector} of $ \mc G$. For a 1D honeycomb network, the $m$ cycles of size $n_c$ form a cycle basis. It is shown in \cite{Jafarpour2019Flows} that all distinct phase-locked configurations with  have distinct winding vectors, so bounding the number of possible winding vectors will bound the number of phase-cohesive phase-locked configurations.
    
    Given the bound \eqref{eq: winding number bound} there are at most $2 \lceil \frac{n_{\sigma}}{4} \rceil - 1$ possible values of $w_{\sigma}$ at a phase-cohesive phase-locked configuration, hence there are at most
    $$\prod_{i = 1}^m \left(2\biggr\lceil \frac{n_{\sigma}}{4} \biggr\rceil -1 \right)$$
    phase-cohesive phase-locked configurations. In the case of the 1D honeycomb where the cycles in the cycle basis all have size $n_c$, the above bound simplifies to
    $$ \left(2\biggr\lceil \frac{n_{\sigma}}{4} \biggr\rceil -1 \right)^m = N_{\text{eq}}.$$
    Since Theorem \ref{thm: capacity} exhibited precisely $ N_{\text{eq}}$ stable phase-locked configurations there can be no other stable phase-locked configurations, as desired.
\end{proof}

\section{Numerical simulations of stable phase-locked configurations}
\label{sec: numerical_section}In this section we compare the capacity of the 1D honeycomb network \eqref{eq: edges of cells} to alternative network structures.

\begin{figure}
    \centering
\includegraphics[width=0.5\linewidth]{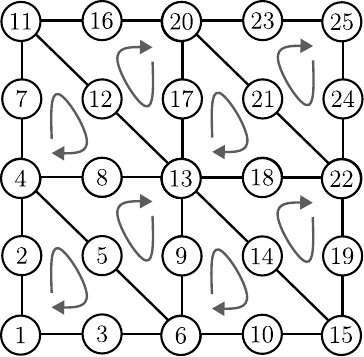}
    \caption{The 2D hexagonal array used in Section~\ref{sec: numerical_section}.}
    \label{fig: two_dimensional_hexagonal_graph}
\end{figure}

\begin{figure}[t]
    \centering
\includegraphics[width=\linewidth]{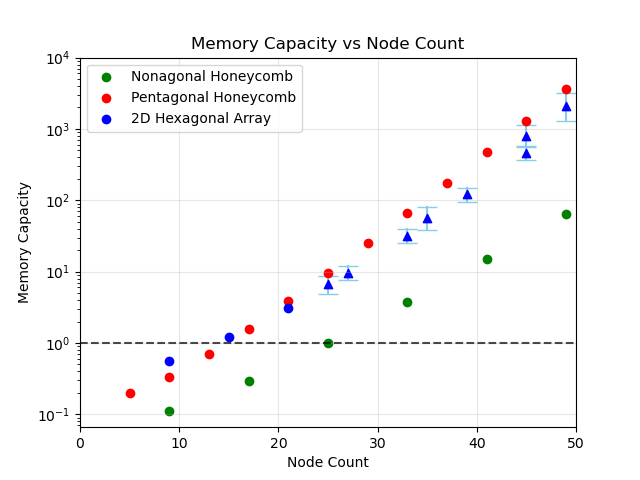}
    \caption{Comparison of the memory capacity of pentagonal and nonagonal honeycomb arrangements vs. the memory capacity of 2D hexagonal array. Triangular dots represent data points that were estimated via a sampling method, and the blue lines represent 95\% confidence intervals around each data point.}
    \label{fig:numerics}
\end{figure}

Using  dynamical systems for associative memory requires (i) precisely characterizing stable memory states that can represent patterns, and (ii) ensuring sufficient memory capacity to encode extensive information. To investigate whether other associative memory network structures might exhibit comparable memory capacity, we construct an alternative graph structure and compare its memory capacity with that of the 1D honeycomb structure. We test the memory capacity of both the pentagonal and nonagonal honeycomb networks, as they are the most node-efficient honeycomb structures.

In considering realistic models of associative memory that can be implemented on a single oscillatory chip layer, we consider planar graphs topologies \cite{ahmed2020oscillatorImplementation}. For scalability purposes, we are interested in graphs that have simple construction rules, such as the 1D honeycomb \eqref{eq: edges of cells} which is composed of copies of a single cycle. We thus compare the pentagonal and nonagonal honeycomb structures to a 2D hexagonal array array, as illustrated in Fig. \ref{fig: two_dimensional_hexagonal_graph}. We chose the hexagonal array as it is one of the smallest and most regular array-like networks; based on the theory in \cite{Jafarpour2019Flows} one can show that the triangular and square arrays have only $1$ phase-cohesive phase-locked configuration, so we would expect the hexagonal array to have more total phase-locked configurations than the triangular and square arrays.


In Fig. \ref{fig:numerics} we compute the memory capacity of each of the three oscillator networks, where the honeycomb structures we choose are described by Fig. \ref{fig: alternative network structure}. The stable phase-locked configurations are computed via the network flow solver in \cite{Jafarpour2019Flows}; this algorithm computes all stable, phase-cohesive phase-locked configurations by checking each possible winding vector. Since the graphs under consideration satisfy the hypotheses of Propositions \ref{prop: phase cohesiveness of our graphs} and \ref{prop: phase cohesiveness of honeycombs}, we know the network flow solver obtains all stable phase-locked configurations of the tested network models. Due to the exponentially growing runtime of looping over all winding vectors, we estimate the memory capacity of large 2D hexagonal arrays by sampling winding vectors and estimating the probability that a random winding vector is exhibited by a stable phase-locked configuration.

We observe that all three graphs structures have an exponential memory capacity in terms of  the network cardinality. The memory capacity of the pentagonal honeycomb is slightly larger than that of the 2D hexagonal array, although their growth rates are similar. The nonagonal honeycomb has a lower memory capacity for a comparable number of nodes, which can be seen by comparing the memory capacity in \eqref{eq: capacity} for $n_c=5,9$ when the number of cycles $m$ is chosen separately for each case such that the node-counts are equal. This demonstrates that the memory capacity of the pentagonal honeycomb structure is comparable with or larger than alternative, scalable network structures. A characterization of the largest memory capacity of any Kuramoto network given a fixed number of nodes remains under investigation. 


\section{Conclusions and future work}{\label{conclusion}}

In this paper, we investigate the problem of designing an oscillator-based network for associative memory, where each oscillator obeys simple Kuramoto dynamics. We show that our 1D honeycomb network model exhibits a capacity that scales exponentially with the number of oscillators and we completely characterize the  stable phase-locked configurations. We also show that such a network has no other stable phase-locked configurations that would lead to the retrieval of a spurious memory. We also derive simple formulas to map the phase-locked configurations of the network to binary memory patterns, which enables memory storage and retrieval. We provide numerical evidence that our associative memory network performs favorably against other regular oscillator-based associative memory networks. Directions of future research include a characterization of the basin of attraction of each phase-locked configuration, as well the design of associative memory networks with maximal capacity.


\bibliographystyle{unsrt}
\bibliography{alias,Main,New}

\appendix

For the proofs in this section it is useful to recall that the Kuramoto model \eqref{eq: Kuramoto model} with zero natural frequency is a gradient flow with respect to the energy
    \begin{align}\label{eq: kuramoto energy}
        f(\theta) = -\sum_{p,q=1}^{|\mathcal{V}|}  \frac{a_{ij}}{2} \cos(\theta_p - \theta_q).
    \end{align}
    In the cases we consider, where the weight matrix $A$ corresponds to an unweighted graph $\mathcal{G}=(\mathcal{V},\mathcal{E})$, we can rewrite the energy as
    \begin{align*}
        f(\theta) & =\sum_{p \in V} \sum_{q \in N_p} -\frac{1}{2}\cos(\theta_p - \theta_q)\\
        &= \sum_{(p,q) \in E} -\cos(\theta_p - \theta_q)
    \end{align*}
    where $N_p$ denotes the set of neighbors of $p$. The Kuramoto dynamics can then be rewritten as
    \begin{align*}
        \dot\theta = -\nabla f(\theta) = (\partial_{\theta_1} f(\theta), \partial_{\theta_2} f(\theta), ..., \partial_{\theta_n} f(\theta))^T
    \end{align*}
    where $\partial_{\theta_i} f$ denotes the partial derivative with respect to $\theta_i$.

\begin{lemma}\label{lem:phase cohesiveness near degree 2 nodes}
    Consider a Kuramoto model \eqref{eq: Kuramoto model} with identical natural frequencies and an unweighted network topology $\mc G = ( \mc V, \mc E)$. Any  stable phase-locked configuration $\theta^*$ of \eqref{eq: Kuramoto model} and any node $i \in V$ with exactly two neighbors $j, k$ must satisfy $\theta^*_i - \theta^*_{j} = \theta^*_{k} - \theta^*_i \in (-\frac{\pi}{2},\frac{\pi}{2})$.
\end{lemma}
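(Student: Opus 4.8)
The plan is to exploit the gradient-flow structure recalled above and to argue entirely from the diagonal of the Hessian $\nabla^2 f$ at the equilibrium. First I would reduce to the zero-frequency case: since all natural frequencies are equal, passing to a uniformly rotating frame turns \eqref{eq: Kuramoto model} into the gradient flow $\dot\theta = -\nabla f(\theta)$ of \eqref{eq: kuramoto energy}, without changing any phase difference $\theta_p - \theta_q$ or the stability type of the configuration. Thus $\theta^*$ satisfies $\nabla f(\theta^*) = 0$, and local exponential stability is equivalent to $\nabla^2 f(\theta^*) \succeq 0$ having its only zero eigenvalue along the rotation direction $\vectorones$ on the connected component of $i$; equivalently, $\ker \nabla^2 f(\theta^*)$ consists exactly of the vectors that are constant on each connected component.

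Next I would extract the two scalar facts attached to the degree-$2$ node $i$. Writing $A = \theta^*_i - \theta^*_j$ and $B = \theta^*_i - \theta^*_k$, stationarity of node $i$ reads
\[
\sin A + \sin B = 0,
\]
while the diagonal Hessian entry, read off from the weighted-Laplacian form of $\nabla^2 f$, is
\[
\bigl(\nabla^2 f(\theta^*)\bigr)_{ii} = \cos A + \cos B = e_i^\top \nabla^2 f(\theta^*)\, e_i .
\]
Because node $i$ has neighbors, the basis vector $e_i$ is not constant on the component of $i$, so $e_i \notin \ker \nabla^2 f(\theta^*)$. Using the standard fact that for a positive semidefinite matrix $M$ one has $v^\top M v = 0 \Rightarrow Mv = 0$, positive semidefiniteness then forces the \emph{strict} inequality $\cos A + \cos B > 0$.

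Finally I would combine the two relations. The condition $\sin A = -\sin B$ leaves only the possibilities $B = -A$ or $B = A + \pi$ modulo $2\pi$. The second case gives $\cos A + \cos B = \cos A - \cos A = 0$, contradicting $\cos A + \cos B > 0$; hence $B = -A$, that is $\theta^*_i - \theta^*_j = -(\theta^*_i - \theta^*_k) = \theta^*_k - \theta^*_i$. Substituting $B = -A$ into the strict inequality yields $2\cos A > 0$, i.e. $A \in (-\tfrac{\pi}{2}, \tfrac{\pi}{2})$, which is exactly the claimed conclusion.

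I expect the delicate point to be the correct use of the stability hypothesis rather than any hard computation: the statement asserts both an exact equality and membership in an \emph{open} interval, and both fail for merely Lyapunov-stable (non-strict) minima. The whole argument hinges on the strictness $\bigl(\nabla^2 f(\theta^*)\bigr)_{ii} > 0$, which is precisely where local exponential stability (equivalently, that $\ker \nabla^2 f(\theta^*)$ is only the rotation direction) is essential. The borderline configurations with an angle difference of $\pm\tfrac{\pi}{2}$, or with $\theta^*_j - \theta^*_k = \pi$, are exactly those that introduce a zero eigenvalue transverse to $\vectorones$, and these are the cases the strict inequality excludes.
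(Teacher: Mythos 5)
Your argument is internally consistent, but it proves a weaker statement than the lemma: you have silently replaced the hypothesis ``stable'' with ``locally exponentially stable.'' The decisive step --- ``$e_i \notin \ker \nabla^2 f(\theta^*)$, hence $e_i^\top \nabla^2 f(\theta^*)\, e_i = \cos A + \cos B > 0$'' --- is valid only if $\ker \nabla^2 f(\theta^*)$ is exactly the rotation direction, which is precisely the definition of transverse hyperbolicity, i.e.\ exponential stability. Under the lemma's actual hypothesis (plain stability), the correct tool, and the one the paper uses, is the result of \cite{absil2006gradientEQ}: for the gradient flow of an analytic function, the stable equilibria are exactly the local minima. A local minimum only guarantees $\nabla^2 f(\theta^*) \succeq 0$, possibly with kernel directions transverse to rotation (the $x \mapsto x^4$ phenomenon), so your argument then yields only $\cos A + \cos B \geq 0$: it can exclude neither the case $B = A + \pi$ (where $\cos A + \cos B = 0$ identically) nor the boundary values $A = \pm \pi/2$. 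Those degenerate configurations are exactly where the work lies.

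The paper closes this hole with an energy argument rather than a Hessian argument: if $\theta^*_j = \theta^*_k + \pi$, the two energy terms containing $\theta_i$ cancel identically, so $\theta_i$ can be moved at constant energy; a generic such move destroys the equilibrium condition at node $j$, producing a non-equilibrium (hence non-minimum) point at the same energy level arbitrarily close to $\theta^*$, whence $\theta^*$ is not a local minimum --- contradicting stability. The strict inclusion in $(-\frac{\pi}{2},\frac{\pi}{2})$ then follows from the case analysis alone, since with $B = -A$ a difference of $\pm\pi/2$ would force $\theta^*_j = \theta^*_k + \pi$. Note also that your closing remark --- that the conclusion ``fails for merely Lyapunov-stable (non-strict) minima'' --- is exactly backwards: the content of the lemma, as used in Theorem \ref{thm: uniqueness} to exclude \emph{all} spurious memories, is that it holds for every stable configuration; the borderline configurations are not stable-but-degenerate, they are genuinely unstable. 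With your version of the lemma, Theorem \ref{thm: uniqueness} would rule out only exponentially stable spurious states, leaving open the possibility of degenerate stable spurious memories.
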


\begin{proof}
    Looking at the system from a rotating frame, we can assume without loss of generality that $\omega_p=0$ for all $p \in \mc V$, in which case all phase-locked configurations become equilibria. We begin by solving for the equilibrium conditions on node $i$. Since $\theta^*$ is an equilibrium point, $\dot\theta^*_i=0$. Since the only neighbors of $i$ are $j$ and $k$, this equation reads as
    \begin{align}\label{eq:degree2-stationary-condition}
        \sin(\theta^*_{j} - \theta^*_{i})+\sin(\theta^*_{k} - \theta^*_i) = 0.
    \end{align}
    The two solutions to \eqref{eq:degree2-stationary-condition} are $\theta^*_j - \theta^*_{i} = -(\theta^*_{k} - \theta^*_i)$ or $\theta^*_{j} - \theta^*_{i} = \pi + \theta^*_{k} - \theta^*_{i}$, the latter of which is equivalent to $\theta^*_{j}=\pi + \theta^*_{k}$. At the end of the proof we show that if $\theta^\ast$ is a stable equilibrium, then $\theta^*_{j}\neq\pi + \theta^*_{k}$. For now, we assume that $\theta^*_{j} \neq \pi + \theta^*_{k}$, hence $\theta^*_j - \theta^*_{i} = -(\theta^*_{k} - \theta^*_i)$, and prove that $\theta^*_{k} - \theta^*_i \in (-\frac{\pi}{2},\frac{\pi}{2})$.
       
    Since the Kuramoto energy $f$ is a sum of cosine functions it is analytic, which implies via a result in \cite{absil2006gradientEQ} that $\theta^*$ is stable if and only if it is a local minimum of the energy function $f(\theta)$. Since $\theta^*$ is stable, it is a local minimum of $f$, which is true only if the Hessian of $f$ is positive definite. This implies
    \begin{align}
        0 \leq \partial^2_{\theta_i} f(\theta^*) &= \cos(\theta^*_{j} - \theta^*_{i})+\cos(\theta^*_{k} - \theta^*_i)\\
        &= 2\cos(\theta^*_{j} - \theta^*_{i})
    \end{align}
    hence we need $\cos(\theta^*_{j} - \theta^*_{i}) \geq 0$, which is true if and only if $\theta^*_{j} - \theta^*_{i} \in [-\frac{\pi}{2},\frac{\pi}{2}]$. Notice that $\theta^*_{j} - \theta^*_{i} \neq \pm \frac{\pi}{2}$, since otherwise we would have $(\theta^*_{j} - \theta^*_{i})-(\theta^*_{k} - \theta^*_i)=2(\theta^*_{j} - \theta^*_{i}) = \pi$, which would contradict the fact that $\theta^*_{j} \neq \pi + \theta^*_{k}$. Thus $\theta^*_{j} - \theta^*_{i} \in (-\frac{\pi}{2},\frac{\pi}{2})$, as desired.
    
    It remains to show that $\theta^*_{j} \neq \pi + \theta^*_{k}$. Suppose that $\theta^*_{j}=\pi + \theta^*_{k}$, and we will create a contradiction by showing that $\theta^{\ast}$ is not a local minimizer of $f(\theta)$. To show that $\theta^{\ast}$ is not a local minimizer, we will show for any arbitrary $\epsilon > 0$ that there is a point $\theta$ satisfying $\|\theta^\ast - \theta\|<\epsilon$ and $f(\theta) < f(\theta^*)$.

    We begin by constructing a point $\Tilde{\theta}^{\ast}$ which satisfies both $\|\Tilde{\theta}^{\ast} - \theta^{\ast}\|<\epsilon/2$ and $f(\Tilde{\theta}^{\ast}) = f(\theta^{\ast})$, but for which $\Tilde{\theta}^{\ast}$ is not an equilibrium point. To do so, we show for any state of the form $\Tilde{\theta}^* = (\theta^*_1, ..., \theta_{i-1}^*, \Tilde{\theta}_i, \theta_{i+1}^*, ..., \theta_{n}^*)$ that $f(\Tilde{\theta}^{\ast}) = f(\theta^\ast)$. Note that the difference $f(\Tilde{\theta}^{\ast}) - f(\theta^{\ast})$ only depends on the terms containing $\Tilde\theta^*_i$ and $\theta^*_i$, i.e
    \begin{align*}
        f(\Tilde{\theta}^{\ast}) - f(\theta^{\ast}) &= -(\cos(\theta_{k}^{\ast} - \Tilde{\theta}_{i}) + \cos(\theta_{j}^{\ast} - \Tilde{\theta}_{i})) \\ 
        & \;\;\;\;\;+(\cos(\theta_{k}^{\ast} - {\theta}^{\ast}_{i}) + \cos(\theta_{j}^{\ast} - {\theta}^{\ast}_{i})) 
    \end{align*}
    since we assumed $\theta_j^* = \theta_k^*+\pi$ we can rewrite the above as
    \begin{align*}
       f(\Tilde{\theta}^{\ast}) - f(\theta^{\ast}) &= -(\cos(\theta_{k}^{\ast} - \Tilde{\theta}_{i}) + \cos(\pi + \theta_{k}^{\ast} - \Tilde{\theta}_{i})) \\ 
        & \;\;\;\;\;+(\cos(\theta_{k}^{\ast} - {\theta}^{\ast}_{i}) + \cos(\pi + \theta_{k}^{\ast} - {\theta}^{\ast}_{i})) \\
        &= 0.
    \end{align*}
    We now find a choice of $\Tilde\theta^*$ which is not at equilibrium. At the state $\Tilde\theta^*$, the velocity of angle $j$ is
    \begin{align}
        \partial_{\theta_j}f(\Tilde\theta^*)\nonumber &= \sum_{p \in N_j} \sin(\Tilde\theta_p^* - \Tilde\theta_j^*)\nonumber\\ 
        &= \sin(\Tilde{\theta}_i - \theta_j^*) + \sum_{p \in N_j, p \neq i} \sin(\theta_p^* - \theta_j^*).\label{eq: velocity intermediate step}
    \end{align}
    Subtracting and adding $\sin(\theta_i^* - \theta_j^*)$ from \eqref{eq: velocity intermediate step} we obtain
    \begin{align}
        \partial_{\theta_j}f(\Tilde\theta^*) &= \sin(\Tilde{\theta}_i - \theta_j^*) - \sin(\theta_i^* - \theta_j^*) + \hspace*{-0.05in}\sum_{p \in N_j} \sin(\theta_p^* - \theta_j^*)\nonumber\\
        &= \sin(\Tilde{\theta}_i - \theta_j^*) - \sin(\theta_i^* - \theta_j^*)\label{eq: velocity of tilde angle j}
    \end{align}
where the last line holds because $\theta^*$ is an equilibrium so $\sum_{p \in N_j} \sin(\theta_p^* - \theta_j^*) = 0$. Since $\sin(\cdot)$ is nonconstant everywhere, for any value of $\theta_i^*$ there is a choice of $\Tilde{\theta}_i \in (\theta_i^*-\epsilon/2, \theta_i^*+\epsilon/2)$ such that
$$\partial_{\theta_j}f(\Tilde\theta^*) = \sin(\Tilde{\theta}_i - \theta_j^*) - \sin(\theta_i^* - \theta_j^*) \neq 0$$
i.e that $\Tilde\theta^*$ is not an equilibrium. Based on the definition of $\Tilde{\theta}^*$, it holds that $||\Tilde\theta^* - \theta^*||_2 = |\Tilde{\theta}_i - \theta_i| < \epsilon/2$.

Now let $\Tilde{\theta}^*$ be chosen such that it is not an equilibrium configuration and $||\Tilde\theta^* - \theta^*||_2 < \epsilon/2$. Since $\Tilde\theta^*$ is not an equilibrium it is not a local minimum. This means there is a point $\theta$ within an $\epsilon/2$-distance of $\Tilde\theta^*$ such that $f(\theta) < f(\Tilde\theta^*)$, and since $f(\Tilde\theta^*) = f(\theta^*)$ we know $f(\theta) < f(\theta^*)$. Since $||\Tilde{\theta^*}-\theta^*||_2 < \epsilon/2$ we know that
\begin{align*}
    ||\theta - \theta^*||_2 &\leq ||\theta - \Tilde\theta^*||_2 + ||\Tilde{\theta^*}-\theta^*||_2< \epsilon/2 + \epsilon/2= \epsilon.
\end{align*} Such a $\theta$ exists for any $\epsilon > 0$, hence $\theta^*$ is not a local minimum of $f$, giving us the desired contradiction. Thus it is not possible that $\theta^*_{j}=\pi + \theta^*_{k}$, which completes the proof.
\end{proof}

As a corollary in the special case when all edges contain a node of degree 2, we obtain the following proposition.

\begin{proposition}\label{prop: phase cohesiveness of our graphs}
    If every edge in an unweighted network $\mathcal{G}=(\mc V, \mc E)$ contains a node of degree 2, then any stable phase-locked configuration $\theta^*$ of \eqref{eq: Kuramoto model} with network topology $\mathcal{G}$ must satisfy $|\theta^*_i - \theta^*_j | < \pi/2$ for any adjacent nodes $i$ and $j$.
\end{proposition}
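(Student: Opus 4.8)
The plan is to obtain this statement as an immediate corollary of Lemma \ref{lem:phase cohesiveness near degree 2 nodes}, applied edge-by-edge. The key observation is that the hypothesis ``every edge contains a node of degree $2$'' furnishes, for \emph{every} edge, a distinguished endpoint at which the lemma is directly applicable. So rather than re-running any stability or Hessian analysis, I would simply invoke the already-proved local statement about degree-$2$ nodes and read off the bound along the incident edge.

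Concretely, I would fix an arbitrary stable phase-locked configuration $\theta^*$ and an arbitrary edge $(i,j) \in \mc E$. By hypothesis at least one endpoint of this edge has degree $2$; without loss of generality take it to be $i$, and let $k$ denote the other neighbor of $i$. Since $i$ has exactly the two neighbors $j$ and $k$, Lemma \ref{lem:phase cohesiveness near degree 2 nodes} applies at node $i$ and yields $\theta^*_i - \theta^*_j = \theta^*_k - \theta^*_i \in (-\tfrac{\pi}{2}, \tfrac{\pi}{2})$. In particular the phase difference across the chosen edge satisfies $|\theta^*_i - \theta^*_j| < \pi/2$. Because the edge was arbitrary, this bound holds along every edge of $\mc G$, which is exactly the claimed phase-cohesiveness.

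I do not expect a substantive obstacle here: the content is carried entirely by the lemma, and the proposition only performs the bookkeeping of selecting the correct endpoint. The single point worth stating explicitly is that when \emph{both} endpoints of an edge have degree $2$ either choice works (the two resulting bounds coincide), and that ``degree $2$'' in the unweighted simple graph $\mc G$ is precisely the ``exactly two neighbors'' hypothesis required by Lemma \ref{lem:phase cohesiveness near degree 2 nodes}. Both are immediate, so the proposition follows at once.
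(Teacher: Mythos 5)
Your proposal is correct and takes exactly the route the paper intends: the paper states this proposition as an immediate corollary of Lemma \ref{lem:phase cohesiveness near degree 2 nodes} with no separate proof, and your argument simply makes the edge-by-edge bookkeeping explicit. Applying the lemma at the degree-$2$ endpoint of each edge and reading off $|\theta^*_i - \theta^*_j| < \pi/2$ is precisely what the paper's ``as a corollary'' remark amounts to.
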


Proposition \ref{prop: phase cohesiveness of our graphs} only applies to certain subclasses of honeycomb networks such as those in Fig. \ref{fig: alternative network structure}. However, the Proposition doesn't apply to the networks in Fig. \ref{fig: network structure} since the junction nodes of degree $4$ are adjacent. For a result that applies to all tree-like honeycomb structures, we prove the following proposition.

\begin{proposition}\label{prop: phase cohesiveness of honeycombs}
    Let $n_c, m \in \N$ be given such that $n_c \geq 3$, $m \geq 1$ and let $\mathcal{G} = (\mathcal{V}, \mathcal{E})$ be an unweighted network constructed according to the following rule:
    \begin{enumerate}
        \item[(R)] $m$ cycles of size $n_c$, denoted $\mathcal{C}_p$ for $p=0, 1, ..., m-1$, are glued together in such a way that one node of $\mathcal{C}_p$ is identified with one node of $\mathcal{C}_{p+1}$ for each $p < m-1$.\label{rule: honeycomb construction}
    \end{enumerate}
    Any stable phase-locked configuration of the Kuramoto model \eqref{eq: Kuramoto model} with identical natural frequencies and network topology given by $\mathcal{G}$ must have $|\theta^*_i - \theta^*_j | < \pi/2$ for any adjacent nodes $i$ and~$j$.
\end{proposition}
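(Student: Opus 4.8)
The plan is to pass to the rotating frame so that $\omega_i\equiv 0$ and the dynamics become the gradient flow of the energy \eqref{eq: kuramoto energy}; by \cite{absil2006gradientEQ} a phase-locked configuration is stable if and only if it is a local minimizer of $f$, so throughout I work with a local minimizer $\theta^*$. First I would invoke Lemma \ref{lem:phase cohesiveness near degree 2 nodes} at every node of degree two. Since such a node forces both of its incident edges to carry a common phase difference in $(-\tfrac{\pi}{2},\tfrac{\pi}{2})$, this settles every edge incident to a degree-two node and shows that along any maximal chain of degree-two nodes (an arc of a cycle joining two junction vertices) the difference equals a single constant $\delta$ with $|\delta|<\tfrac{\pi}{2}$. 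Under rule (R) every vertex has degree $2$ or $4$, so the only edges not yet controlled are those joining two degree-$4$ junctions; these occur exactly inside an interior cycle $\mathcal{C}_p$ whose two shared vertices are adjacent, the remainder of $\mathcal{C}_p$ being a single arc of $n_c-1$ degree-two edges.

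Fix such a junction–junction edge $e=(u,v)$ and let $\beta:=\theta^*_u-\theta^*_v\in(-\pi,\pi]$. Lifting the phases around $\mathcal{C}_p$ and using that the arc carries the constant difference $\delta$ gives the winding relation $\beta\equiv-(n_c-1)\delta \pmod{2\pi}$. To exploit minimality I would test the Hessian $H=-J(\theta^*)=B\,\mathrm{diag}(\cos(\theta^*_i-\theta^*_j))B^\top$ against the direction $\xi$ that equals $0$ on $u$ and on the whole pendant chain of cycles attached through $u$, equals $1$ on $v$ and on the pendant chain attached through $v$, and interpolates linearly along the arc. The two pendant chains contribute nothing and the arc acts as $n_c-1$ unit conductances in series, so
\begin{align*}
\xi^\top H\,\xi=\Big(\cos\beta+\tfrac{\cos\delta}{\,n_c-1\,}\Big)(\xi_u-\xi_v)^2\ \ge\ 0,
\end{align*}
whence $\cos\beta+\tfrac{\cos\delta}{n_c-1}\ge 0$.

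The difficulty is that this inequality, even together with the winding relation, does not by itself force $|\beta|<\tfrac{\pi}{2}$: one can satisfy both with $\beta=\tfrac{\pi}{2}$. The additional constraint that eliminates such points is the equilibrium force balance at the degree-$4$ junctions, which couples $e$ to its neighboring cycles along the spine of junction–junction edges through a relation of the form $\sin\beta=\sin\beta'+\sin\delta-\sin\delta'$ linking consecutive junction–junction edges. I would therefore close the argument by induction inward from the two ends of the chain: the end cycles are phase-cohesive by the degree-two argument alone, and at each step the already-established cohesiveness of the outer part bounds the reaction that an adjacent phase-cohesive cycle can transmit to a junction, which combined with the Hessian inequality and the winding relation upgrades $\cos\beta\ge 0$ to $\cos\beta>0$. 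Controlling this transmitted reaction uniformly along the spine is the step I expect to be the main obstacle, since it is precisely where the global chain geometry—rather than any purely local, single-node condition—must enter; an appealing alternative would be to run the same estimate at the junction–junction edge of largest $|\beta|$ and derive a contradiction there, avoiding the full induction.
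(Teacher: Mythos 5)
Your argument is incomplete, and the gap sits exactly where the proposition has its content. Edges with at least one degree-2 endpoint are already handled by Lemma \ref{lem:phase cohesiveness near degree 2 nodes} (that case is Proposition \ref{prop: phase cohesiveness of our graphs}); the sole purpose of Proposition \ref{prop: phase cohesiveness of honeycombs} is to cover the junction--junction edges, e.g.\ the chords in Fig.~\ref{fig: network structure} where two degree-4 nodes are adjacent. For those edges you derive two correct necessary conditions --- the winding relation $\beta \equiv \pm(n_c-1)\delta \pmod{2\pi}$ and the test-vector inequality $\cos\beta + \tfrac{\cos\delta}{n_c-1}\ge 0$ --- but, as you concede yourself, these are jointly satisfiable at $\beta=\pi/2$ (take $\delta=-\tfrac{\pi}{2(n_c-1)}$), so they cannot yield $|\beta|<\pi/2$. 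The proposed repairs (force balance along the spine, inward induction, an extremal-edge argument) are left as sketches whose key step, ``controlling the transmitted reaction uniformly along the spine,'' is precisely the unproven assertion. As written, the proof establishes phase cohesiveness only on the edges that the earlier results already cover.

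The idea you are missing is that stability restricts to subnetworks, which lets the degree-2 lemma do all the work. The energy \eqref{eq: kuramoto energy} is separable across the cycles, $f=\sum_{p} f_{\mathcal{C}_p}$, and $f_{\mathcal{C}_{m-1}}$ depends only on the angles of the leaf cycle $\mathcal{C}_{m-1}$. Every edge of the leaf cycle contains a degree-2 node, so Lemma \ref{lem:phase cohesiveness near degree 2 nodes} settles those edges directly. Now fix the leaf-cycle angles at their equilibrium values: since $\theta^*$ is a local minimizer of $f$ (by \cite{absil2006gradientEQ}), the remaining angles, together with the shared node's angle $\theta^*_{i_0}$, form a local minimizer of $\sum_{p\le m-2} f_{\mathcal{C}_p}$, i.e.\ a stable equilibrium of the Kuramoto model on the smaller honeycomb $\mathcal{C}_0\cup\dots\cup\mathcal{C}_{m-2}$, which is again a Rule-(R) graph. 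Crucially, in that smaller network the former junction $i_0$ has degree 2, so the chord of $\mathcal{C}_{m-2}$ --- a junction--junction edge of $\mathcal{G}$, and exactly the kind of edge your argument cannot reach --- now has a degree-2 endpoint. Induction on the number of cycles $m$ then gives cohesiveness of every edge, with no Hessian test vectors, winding relations, or force-balance bookkeeping along the spine. This is the paper's proof; if you want to salvage your approach, this restriction-plus-induction step is what has to replace your ``transmitted reaction'' estimate.
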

\begin{proof}
    As in the proof of Lemma \ref{lem:phase cohesiveness near degree 2 nodes}, we consider a rotating frame such that $w_i = 0$ for all $i \in \mathcal{V}$, and all phase-locked configurations are equilibria. The proof is by induction on the number of cycles $m$. As a base case of $m=1$, $\mathcal{G}$ is a cycle of size at least 3. Then every node has degree exactly 2, so the result is obtained by Proposition \ref{prop: phase cohesiveness of our graphs}.
    
    For the inductive step, let $M$ be a fixed positive integer and suppose that all graphs constructed via Rule (R) with $m \leq M$ cycles only have phase-cohesive equilibria. Let $\mathcal{G}$ be a graph constructed via Rule (R) with $M+1$ cycles and we will show that any stable equilibrium $\theta^*$ is phase-cohesive. The strategy is to make use of the fact that the cycles are joined in a tree-like structure and the fact that the Kuramoto energy function is separable along each cycle, so that we can first prove the phase-cohesiveness of leaf-cycles, then remove the leaves and inductively treat the remaining subnetwork.

    Notice that the last cycle $\mathcal{C}_{m-1}=\mathcal{C}_M$ can only have one node $i_0$ connected to a node not in $\mathcal{C}_{m-1}$ since it is the last cycle to be glued (this one node will be connected to $\mathcal{C}_{m-2}$). Let $(i_0, i_1, ..., i_{n_c-1})$ denote the nodes of $\mathcal{C}_{m-1}$ in the cycle order such that $(i_l, i_{l+1}) \in \mathcal{E}$ for all $l=0,1,...,n_c-1$ and $(i_{n_c-1}, i_0) \in \mathcal{E}$. We know that $i_l$ for $l\neq 0$ must have degree 2, which implies by Lemma \ref{lem:phase cohesiveness near degree 2 nodes} the following:
    \begin{itemize}
        \item[(i)] the angle difference along any edge in $\mathcal{C}_{m-1}$ is less than $\pi/2$ and
        \item[(ii)] $\theta_{i_{l}}-\theta_{i_{l+1}} = \theta_{i_{l+1}}-\theta_{i_{l+2}}$ for each $l = 0,1,...,n_c-2$, where the index $l+2$ is taken modulo $n_c$.
    \end{itemize}
    Because of (i) it remains to show that the angle difference along any edge in $\mathcal{C}_{m-2}$, $\mathcal{C}_{m-3}$, ..., $\mathcal{C}_{0}$ is less than $\pi/2$. Because of (ii) we know that $\theta_{i_{0}}-\theta_{i_{1}} = \theta_{i_{n_c-1}}-\theta_{i_{0}}$.

    Recall from \cite{absil2006gradientEQ} that since $\theta^*$ is a stable equilibrium it is a local minimum of the Kuramoto energy $f$ described in \eqref{eq: kuramoto energy}. Notice that the energy $f$ is separable along the edges; that is, the energy $f$ can be decomposed as a sum of energies on each edge in $\mathcal{E}$. We separate $f$ as the sum of energies on edges in $\mathcal{C}_{p}$ for each $p \in \{0,1,...,m-1\}$, which we denote by $f_{\mathcal{C}_p}$ for each $p$, so that $f(\theta) = \sum_{p=0}^{m-1} f_{\mathcal{C}_p} (\theta)$. Given a fixed choice of $\theta_{i_0}^*, \theta_{i_1}^*, ..., \theta_{i_{n_c-1}}^*$, which we group together as $\theta^*_{\mathcal{C}_{m-1}}$, it is necessary that the remaining nodes, denoted $\theta^*_{-\mathcal{C}_{m-1}}$ are a local minimum of the sliced function $f(\theta^*_{\mathcal{C}_{m-1}}, \,\cdot\,)$, which takes as input $\theta^*_{-\mathcal{C}_{m-1}}$; otherwise we could perturb the coefficients $\theta^*_{-\mathcal{C}_{m-1}}$ in $\theta^*$ and reduce $f$, contradicting the fact that $\theta^*$ is a local minimum of $f$.
    
    We know that $f_{\mathcal{C}_p}$ only depends on edges between the nodes in $\mathcal{C}_{m-1}$, so it is constant with respect to $\theta^*_{-\mathcal{C}_{m-1}}$. This implies that minimizing $f(\theta^*_{\mathcal{C}_{m-1}}, \theta_{-\mathcal{C}_{m-1}})$ with respect to $\theta_{-\mathcal{C}_{m-1}}$ is equivalent to minimizing
    \begin{align*}
        f(\theta^*_{\mathcal{C}_{m-1}}, \theta_{-\mathcal{C}_{m-1}}) - f_{\mathcal{C}_p}(\theta^*_{\mathcal{C}_{m-1}}, \theta_{-\mathcal{C}_{m-1}})\\
        = \sum_{p=0}^{m-2} f_{\mathcal{C}_p} (\theta^*_{\mathcal{C}_{m-1}}, \theta_{-\mathcal{C}_{m-1}}).
    \end{align*}

    

    Since the Kuramoto energy depends only on the relative phase of the angles, any local minimizer of the Kuramoto energy with one angle argument fixed is a local minimizer of the entire Kuramoto energy. Notice that $\theta^*_{-\mathcal{C}_{m-1}}$ is a local minimizer of $\sum_{p=0}^{m-2} f_{\mathcal{C}_p} (\theta^*_{\mathcal{C}_{m-1}}, \,\cdot\,)$, which is the energy of the network without cycle $\mathcal{C}_{m-1}$ when $\theta_{i_0}$ is fixed at $\theta^*_{i_0}$, which implies $(\theta^*_{i_0}, \theta^*_{-\mathcal{C}_{m-1}})$ is a local minimizer of the entire Kuramoto energy $\sum_{p=0}^{m-2} f_{\mathcal{C}_p}$. This implies that $(\theta^*_{i_0}, \theta^*_{-\mathcal{C}_{m-1}})$ is a stable equilibrium of the zero-frequency Kuramoto model with network topology given by joining $\mathcal{C}_0, ..., \mathcal{C}_{m-2}$. Per the previous paragraph, any angle differences $\theta^*_i - \theta^*_j$ in $\mathcal{C}_0, ..., \mathcal{C}_{m-2}$ must be less than $\pi/2$. This completes the proof.
    \end{proof}
\end{document}